\documentclass[conference]{IEEEtran}
\IEEEoverridecommandlockouts

\usepackage{amsmath,amssymb,amsfonts,amsthm}
\usepackage{algorithmic}
\usepackage{algorithm}
\usepackage{graphicx}
\usepackage{textcomp}
\usepackage{xcolor}
\usepackage{cite}
\usepackage{bm}
\usepackage{booktabs}

\usepackage{tikz}
\usetikzlibrary{shapes.geometric,arrows.meta,positioning}

\newtheorem{proposition}{Proposition}

\newtheorem{lemma}{Lemma}
\newtheorem{remark}{Remark}

\newcommand{\bP}{\boldsymbol{\Phi}}
\newcommand{\bT}{\boldsymbol{\Theta}}
\newcommand{\bQ}{\mathbf{Q}}

\newcommand{\bH}{\mathbf{H}}
\newcommand{\bG}{\mathbf{G}}
\newcommand{\bF}{\mathbf{F}}
\newcommand{\bI}{\mathbf{I}}
\newcommand{\bzero}{\mathbf{0}}
\newcommand{\bSigma}{\boldsymbol{\Sigma}}
\newcommand{\bLambda}{\boldsymbol{\Lambda}}
\newcommand{\bU}{\mathbf{U}}
\newcommand{\bV}{\mathbf{V}}
\newcommand{\bA}{\mathbf{A}}
\newcommand{\bK}{\mathbf{K}}
\newcommand{\herm}{^{\mathsf{H}}}
\newcommand{\trans}{^{\mathsf{T}}}
\newcommand{\diag}{\mathrm{diag}}
\newcommand{\blkdiag}{\mathrm{blkdiag}}
\newcommand{\Tr}{\mathrm{Tr}}
\newcommand{\E}{\mathbb{E}}
\newcommand{\C}{\mathbb{C}}
\newcommand{\R}{\mathbb{R}}

\begin{document}

\title{A Family of Hybrid Beyond-Diagonal RIS\\ Architectures: Design and Performance Analysis}

\author{\IEEEauthorblockN{Konstantinos Ntougias,~\IEEEmembership{Member,~IEEE,} and Ioannis Krikidis,~\IEEEmembership{Fellow,~IEEE}}
\IEEEauthorblockA{Department of Electrical and Computer Engineering, University of Cyprus, Nicosia, Cyprus\\
Email: \{ntougias.konstantinos, krikidis\}@ucy.ac.cy}
}

\maketitle

\begin{abstract}
Beyond-diagonal reconfigurable intelligent surfaces (BD-RISs) extend conventional diagonal RISs by allowing inter-element coupling, thereby enlarging the set of attainable scattering matrices and improving the achievable signal-to-noise ratio (SNR). On the other hand, hybrid active/passive RISs use reflect-type power amplifiers in a fraction of the elements to alleviate the multiplicative path loss. In this paper, we bring these two ideas together and introduce a \emph{family of hybrid BD-RIS architectures}, in which the surface is partitioned into two reflecting subsurfaces (RSs), each adopting either a passive or an active group-connected BD-RIS design. We derive a closed-form SNR-maximizing solution that combines, for every BD-RIS group, Takagi's factorization of a certain complex symmetric matrix with an optimal per-group amplification factor that satisfies the reflect-power budget. Three architectures within the proposed family (active/passive, fully-connected-active/sub-connected-active, and sub-connected-active/sub-connected-active hybrid BD-RIS) are studied. Numerical results in a single-input single-output (SISO) link with blocked direct path show that the proposed hybrid BD-RIS architectures attain the same or higher receive SNR than their diagonal counterparts while using significantly fewer reflect-type amplifiers.
\end{abstract}

\begin{IEEEkeywords}
Beyond-diagonal RIS, hybrid RIS, sub-connected active RIS, Takagi factorization, SNR maximization, group-connected BD-RIS architecture.
\end{IEEEkeywords}

\section{Introduction}\label{sec:intro}
Reconfigurable intelligent surfaces (RISs) are widely regarded as a key enabler for sixth-generation (6G) wireless networks, since they allow to engineer the propagation environment in a cost- and energy-efficient way \cite{direnzo2020,pan2020,huang2019,wu2019}. In its most popular form, a RIS is a planar surface composed of $M$ nearly-passive reconfigurable elements that can independently impose a phase shift on the impinging signal. Hence, its scattering matrix is constrained to be diagonal with unit-modulus entries, $\bT=\diag(e^{j\theta_1},\ldots,e^{j\theta_M})$ \cite{wu2019,huang2019}.

Two complementary research directions have recently emerged to overcome the limitations of conventional diagonal RIS. The first one, known as \emph{beyond-diagonal RIS} (BD-RIS), removes the diagonality constraint and allows controllable interconnections between RIS elements through reconfigurable impedance networks \cite{shen2022,li2023,nerini2023}. Depending on how its ports are interconnected, the BD-RIS scattering matrix $\bT\in\C^{M\times M}$ may be \emph{single-connected} (diagonal), \emph{group-connected} (block-diagonal with each block being unitary and, for reciprocal networks, symmetric) or \emph{fully-connected} (a single full block) \cite{shen2022,li2023}. The closed-form solution to the SNR maximization problem in BD-RIS-assisted single- and multi-antenna links was recently obtained in \cite{santamaria2023bdris} via Takagi's factorization, and an alternative formulation was given in \cite{nerini2023}.

The second direction is the use of \emph{active} RISs, in which a low-cost reflect-type power amplifier (typically a tunnel-diode-based negative-resistance circuit) is integrated in each RIS element to compensate for the multiplicative path loss of the cascaded link \cite{long2021,zhang2023active}. As power amplifiers consume static and dynamic power, hybrid active/passive RISs have been proposed in which only a fraction of the elements are active \cite{nguyen2022,ntougias2025hybrid}. To further reduce hardware cost and total power consumption (TPC), \emph{sub-connected} active RISs have been considered, where multiple elements share a common amplifier \cite{nguyen2022,ntougias2025hybrid}. Our recent work \cite{ntougias2025hybrid} introduced a family of hybrid (diagonal) RIS architectures combining fully-connected (FC) active, sub-connected (SC) active and passive elements, and analyses the resulting capacity/energy-efficiency trade-off.

To the best of our knowledge, BD-RIS and hybrid RIS technologies have so far been considered \emph{separately}. The benefits, however, are clearly complementary: BD-RIS exploits inter-element coupling to enlarge the set of attainable equivalent channels, while active elements compensate for the cascaded path loss. In this paper, we bridge these two lines of work and propose a family of \emph{hybrid BD-RIS} architectures.

\textbf{Contributions.} We propose a family of hybrid BD-RIS architectures that partition the surface into two reflecting subsurfaces, each independently adopting a passive, FC-active, or SC-active group-connected BD-RIS structure, and recovering the diagonal hybrid RIS of \cite{ntougias2025hybrid} as a special case. Within an alternating-optimization framework, we show that the SNR-maximization problem in the SISO link with blocked direct path admits a closed-form solution: each unitary-symmetric block is obtained from the Takagi factorization \cite{santamaria2023bdris} of a rank-2 complex-symmetric matrix, and the per-group amplification factors of the active blocks follow in closed form from the reflect-power constraint. Numerical results show that the SC-active/SC-active and FC-active/SC-active members of the family outperform the fully-active diagonal RIS of \cite{long2021} while using only two and $G_1{+}1$ reflect-type amplifiers, respectively, whereas the active/passive variant matches the diagonal active/passive RIS of \cite{ntougias2025hybrid} and provides additional robustness with respect to the active-element fraction.

%

\textbf{Notation.} Lowercase/uppercase boldface letters denote column vectors/matrices; $(\cdot)\trans$, $(\cdot)\herm$, and $(\cdot)^{\ast}$ denote the transpose, conjugate transpose, and complex conjugate; $\|\cdot\|$ and $\|\cdot\|_F$ are the Euclidean and Frobenius norms; $\bI_n$ is the $n\times n$ identity; $\diag(\mathbf{x})$ is a diagonal matrix; $\blkdiag(\bA_1,\ldots,\bA_L)$ is a block-diagonal matrix with blocks $\bA_l$; $\angle x$ denotes the phase of $x\in\C$; $\Tr(\cdot)$ is the trace; $\mathcal{CN}(\bzero,\bSigma)$ is the complex circularly-symmetric Gaussian distribution.

\section{System Model}\label{sec:system}
We consider a single-input single-output (SISO) link assisted by a BD-RIS with $M$ reconfigurable elements. The single-antenna transmitter (Tx) and receiver (Rx) communicate exclusively through the RIS, since we assume that the direct Tx--Rx link is blocked, e.g., by an obstacle \cite{santamaria2023bdris,huang2019}.

\subsection{Hybrid BD-RIS Architecture}\label{subsec:hbdris}
Following \cite{ntougias2025hybrid}, the BD-RIS aperture is partitioned into $S=2$ reflecting subsurfaces (RSs), $\mathrm{RS}_1$ and $\mathrm{RS}_2$. RS $s$ comprises $M_s$ elements organized into $G_s=M_s/M_{G,s}$ BD-RIS groups of $M_{G,s}$ elements each, coupled within group $g$ by a unitary symmetric matrix $\bT_{s,g}\in\C^{M_{G,s}\times M_{G,s}}$ \cite{shen2022,santamaria2023bdris}. The hybrid family is defined by how each RS distributes amplification across its groups: \textbf{i)} \emph{Passive group-connected BD-RIS:} no amplifiers; $\bP_{s,g}=\bT_{s,g}$, $\bT_{s,g}\herm\bT_{s,g}=\bI_{M_{G,s}}$. \textbf{ii)} \emph{Fully-connected (FC) active group-connected BD-RIS:} each group $g$ owns one reflect-type power amplifier of amplitude $\beta_{s,g}\le\beta_{\max}$, feeding its $M_{G,s}$ patches with equal power allocation \cite[Sec.~II-B]{ntougias2025hybrid}, hence
\begin{equation}\label{eq:Phi_fc}
    \bP_{s,g}=\tfrac{\beta_{s,g}}{\sqrt{M_{G,s}}}\,\bT_{s,g}.
\end{equation}
\textbf{iii)} \emph{Sub-connected (SC) active group-connected BD-RIS:} the $G_s$ groups are clustered into $L_s$ partitions; cluster $\ell$ contains $K_{s,\ell}$ groups, i.e., $T_{s,\ell}\triangleq K_{s,\ell}\,M_{G,s}$ elements that share one amplifier of amplitude $\tilde{\beta}_{s,\ell}\le\beta_{\max}$, yielding
\begin{equation}\label{eq:Phi_sc}
    \bP_{s,g}=\tfrac{\tilde{\beta}_{s,\ell(g)}}{\sqrt{T_{s,\ell(g)}}}\,\bT_{s,g}, \quad g\in\mathcal{G}_{s,\ell(g)},
\end{equation}
where $\ell(g)\in\{1,\ldots,L_s\}$ indexes the cluster of group $g$ and $\mathcal{G}_{s,\ell}$ is its index set; the diagonal SC-active RIS of \cite{ntougias2025hybrid} is recovered when $M_{G,s}=1$. The hybrid BD-RIS scattering matrix is $\bP=\blkdiag(\bP_1,\bP_2)$, with $\bP_s=\blkdiag(\bP_{s,1},\ldots,\bP_{s,G_s})$.
\begin{figure}[!t]
\centering
\begin{tikzpicture}[>=stealth, font=\small, x=1cm, y=1cm,
    ampsym/.style={draw,thick,fill=white,
                   isosceles triangle,isosceles triangle apex angle=55,
                   shape border rotate=270,
                   minimum height=0.22cm,inner sep=0pt}]
  \def\es{0.40}                          
  \def\gs{0.06}                          
  \def\rs{0.14}                          
  \def\panX{(2*\es+\gs+2*\es-0.025)}     

  \def\bdgroup#1#2#3{%
    \foreach \cx in {0,1}{%
      \foreach \cy in {0,1}{%
        \fill[#3] ({#1+\cx*\es},{#2+\cy*\es})
          rectangle ({#1+\cx*\es+\es-0.025},{#2+\cy*\es+\es-0.025});
        \draw[thin,black] ({#1+\cx*\es},{#2+\cy*\es})
          rectangle ({#1+\cx*\es+\es-0.025},{#2+\cy*\es+\es-0.025});
      }%
    }%
    \draw[very thick,black] ({#1-0.03},{#2-0.03})
      rectangle ({#1+2*\es-0.025+0.03},{#2+2*\es-0.025+0.03});
  }

  \begin{scope}[xshift=0cm, yshift=2.6cm]
    \bdgroup{0}{2*\es+\rs}{gray!30}
    \bdgroup{2*\es+\gs}{2*\es+\rs}{gray!30}
    \bdgroup{0}{0}{gray!30}
    \bdgroup{2*\es+\gs}{0}{gray!30}
    \node[font=\small\bfseries] at ({\panX/2},{2*\es+\rs+2*\es+0.50})
      {Passive BD-RIS};
    \node[font=\scriptsize] at ({\panX/2},-0.22) {0 amplifiers};
  \end{scope}

  \begin{scope}[xshift=2.8cm, yshift=2.6cm]
    \bdgroup{0}{2*\es+\rs}{brown!70!black}
    \bdgroup{2*\es+\gs}{2*\es+\rs}{brown!70!black}
    \node[ampsym] at ({\es-0.0125},{2*\es+\rs+2*\es+0.20}) {};
    \node[ampsym] at ({2*\es+\gs+\es-0.0125},{2*\es+\rs+2*\es+0.20}) {};
    \bdgroup{0}{0}{gray!30}
    \bdgroup{2*\es+\gs}{0}{gray!30}
    \node[font=\small\bfseries] at ({\panX/2},{2*\es+\rs+2*\es+0.50})
      {A/P-BD};
    \node[font=\scriptsize] at ({\panX/2},-0.22) {$G_1$ amplifiers};
  \end{scope}

  \begin{scope}[xshift=0cm, yshift=-0.5cm]
    \bdgroup{0}{2*\es+\rs}{brown!70!black}
    \bdgroup{2*\es+\gs}{2*\es+\rs}{brown!70!black}
    \node[ampsym] at ({\es-0.0125},{2*\es+\rs+2*\es+0.20}) {};
    \node[ampsym] at ({2*\es+\gs+\es-0.0125},{2*\es+\rs+2*\es+0.20}) {};
    \bdgroup{0}{0}{teal!70}
    \bdgroup{2*\es+\gs}{0}{teal!70}
    \draw[very thick,blue!70,rounded corners=2pt]
      (-0.06,-0.06) rectangle ({\panX+0.06},{2*\es-0.025+0.06});
    \node[ampsym] at ({\panX/2},0.9) {};
    \node[font=\small\bfseries] at ({\panX/2},{2*\es+\rs+2*\es+0.50})
      {FC/SC-BD};
    \node[font=\scriptsize] at ({\panX/2},-0.55) {$G_1+1$ amplifiers};
  \end{scope}

  \begin{scope}[xshift=2.8cm, yshift=-0.5cm]
    \bdgroup{0}{2*\es+\rs}{teal!70}
    \bdgroup{2*\es+\gs}{2*\es+\rs}{teal!70}
    \draw[very thick,blue!70,rounded corners=2pt]
      (-0.06,{2*\es+\rs-0.06}) rectangle ({\panX+0.06},{2*\es+\rs+2*\es-0.025+0.06});
    \node[ampsym] at ({\panX/2},{2*\es+\rs+2*\es+0.20}) {};
    \bdgroup{0}{0}{teal!70}
    \bdgroup{2*\es+\gs}{0}{teal!70}
    \draw[very thick,blue!70,rounded corners=2pt]
      (-0.06,-0.06) rectangle ({\panX+0.06},{2*\es-0.025+0.06});
    \node[ampsym] at ({\panX/2},0.9) {};
    \node[font=\small\bfseries] at ({\panX/2},{2*\es+\rs+2*\es+0.50})
      {SC/SC-BD};
    \node[font=\scriptsize] at ({\panX/2},-0.55) {2 amplifiers};
  \end{scope}

  \begin{scope}[xshift=4.7cm, yshift=2.78cm, font=\scriptsize]
    \def\lc{0.0}         
    \def\lt{0.30}        
    \def\rA{0.00}
    \def\rB{-0.40}
    \def\rC{-0.80}
    \def\rD{-1.20}
    \def\rE{-1.60}
    \def\rF{-2.00}

    \fill[brown!70!black] (\lc,\rA) rectangle (\lc+0.22,\rA+0.22);
    \draw[thin,black]     (\lc,\rA) rectangle (\lc+0.22,\rA+0.22);
    \node[anchor=west] at (\lt,\rA+0.11) {FC-active group};

    \fill[teal!70]        (\lc,\rB) rectangle (\lc+0.22,\rB+0.22);
    \draw[thin,black]     (\lc,\rB) rectangle (\lc+0.22,\rB+0.22);
    \node[anchor=west] at (\lt,\rB+0.11) {SC-active group};

    \fill[gray!30]        (\lc,\rC) rectangle (\lc+0.22,\rC+0.22);
    \draw[thin,black]     (\lc,\rC) rectangle (\lc+0.22,\rC+0.22);
    \node[anchor=west] at (\lt,\rC+0.11) {Passive group};

    \draw[very thick,black] (\lc,\rD) rectangle (\lc+0.22,\rD+0.22);
    \node[anchor=west] at (\lt,\rD+0.11) {BD-RIS group};

    \draw[very thick,blue!70,rounded corners=2pt]
        (\lc,\rE) rectangle (\lc+0.22,\rE+0.22);
    \node[anchor=west] at (\lt,\rE+0.11) {SC-active cluster};

    \node[ampsym] at (\lc+0.11,\rF+0.11) {};
    \node[anchor=west] at (\lt,\rF+0.11) {Reflect amplifier};
  \end{scope}

\end{tikzpicture}
\caption{The proposed family of hybrid BD-RIS architectures, illustrated for a panel with $M{=}16$ elements partitioned as $M_1{=}M_2{=}8$ and group size $M_{G,1}{=}M_{G,2}{=}4$ (i.e., two BD-RIS groups per RS). Thick black borders delimit BD-RIS groups, within which a unitary symmetric block $\bT_{s,g}$ couples the elements; blue rounded borders delimit SC-active clusters whose groups share a single reflect-type power amplifier (triangle). The diagonal hybrid RIS family of \cite{ntougias2025hybrid} corresponds to $M_{G,s}{=}1$.}
\label{fig:family}
\end{figure}

\subsection{Channel and Signal Model}\label{subsec:channel}
We partition the Tx-to-RIS and RIS-to-Rx channels $\mathbf{h}_T,\mathbf{h}_R\in\C^{M\times 1}$ as $\mathbf{h}_T=[\mathbf{h}_{T,1}\trans\;\mathbf{h}_{T,2}\trans]\trans$, $\mathbf{h}_R=[\mathbf{h}_{R,1}\trans\;\mathbf{h}_{R,2}\trans]\trans$, with $\mathbf{h}_{T,s},\mathbf{h}_{R,s}\in\C^{M_s\times 1}$ further split into per-group blocks $\mathbf{h}_{T,s,g},\mathbf{h}_{R,s,g}\in\C^{M_{G,s}\times 1}$. We adopt $\mathrm{PL}=-30-10\alpha\log_{10}(d)$ in dB and Rayleigh small-scale fading on both links \cite{santamaria2023bdris,ntougias2025hybrid}. The transmitted symbol $x$ has power $P_t=\E\{|x|^2\}$. If RS $s$ is active, its reflected signal includes amplification noise $\mathbf{z}_s\sim\mathcal{CN}(\bzero,\delta_s^{2}\bI_{M_s})$ \cite{long2021,zhang2023active,ntougias2025hybrid}, $\mathbf{t}_s=\bP_s(\mathbf{h}_{T,s}x+\mathbf{z}_s)$; for a passive RS, $\mathbf{z}_s=\bzero$. The Rx signal is $y=h_{\mathrm{eq}}\,x+\tilde{n}$, with equivalent channel $h_{\mathrm{eq}}=\mathbf{h}_R\herm\bP\mathbf{h}_T$, $n\sim\mathcal{CN}(0,\sigma^{2})$ thermal noise, and effective noise variance
\begin{equation}\label{eq:noise}
    \tilde{\sigma}^{2}=\sigma^{2}+\sum_{s\in\mathcal{S}_{\mathrm{a}}}\delta_s^{2}\|\mathbf{h}_{R,s}\herm\bP_s\|^{2},
\end{equation}
where $\mathcal{S}_{\mathrm{a}}\subseteq\{1,2\}$ is the set of active RSs. The reflect power of an active RS, using $\|\bT_{s,g}\|_F^{2}=M_{G,s}$ and $c_{s,g}\triangleq\|\mathbf{h}_{T,s,g}\|^{2}$, is
\begin{align}\label{eq:Pr}
    P_{r,s}\!=\!\!\!\begin{cases}
       \displaystyle\sum_{g=1}^{G_s}\!\frac{\beta_{s,g}^{2}}{M_{G,s}}\bigl(P_t c_{s,g}+\delta_s^{2}M_{G,s}\bigr), & s\in\mathcal{S}_{\mathrm{fc}},\\[4pt]
       \displaystyle\sum_{\ell=1}^{L_s}\!\frac{\tilde{\beta}_{s,\ell}^{2}}{T_{s,\ell}}\!\!\sum_{g\in\mathcal{G}_{s,\ell}}\!\!\bigl(P_t c_{s,g}+\delta_s^{2}M_{G,s}\bigr), & s\in\mathcal{S}_{\mathrm{sc}},
    \end{cases}
\end{align}
and is constrained by $P_{r,s}\le P_{r,s}^{\max}$ \cite{ntougias2025hybrid,zhang2023active}. The receive SNR is $\gamma=P_t|h_{\mathrm{eq}}|^{2}/\tilde{\sigma}^{2}$.

\subsection{Members of the Proposed Family}\label{subsec:members}
We focus on the four members listed in Table~\ref{tab:amps} and depicted in Fig.~\ref{fig:family}: \textbf{1)} passive group-connected BD-RIS \cite{santamaria2023bdris}; \textbf{2)} A/P-BD with FC-active RS$_1$ and passive RS$_2$; \textbf{3)} FC/SC-BD with FC-active RS$_1$ and SC-active RS$_2$ ($L_2{=}1$); \textbf{4)} SC/SC-BD with both RSs SC-active and one shared amplifier each. The diagonal hybrid RIS of \cite{ntougias2025hybrid} corresponds to $M_{G,s}{=}1$.

\begin{remark}
The amplified signal at the output of a shared amplifier is split equally among its $T_{s,\ell}$ patches \cite{ntougias2025hybrid}, so enlarging an SC partition (i.e., $T_{s,\ell}\!\uparrow$ at fixed $L_s$) reduces the per-element amplitude as $1/\sqrt{T_{s,\ell}}$, while increasing $L_s$ at fixed $T_{s,\ell}$ keeps it unchanged but multiplies the amplifier count by the same factor.
\end{remark}

\begin{table}[!t]
\centering
\caption{Amplifier count and per-element amplitude scaling. $T_{s,\ell}$ is the number of elements sharing one amplifier in the SC case.}
\label{tab:amps}
\renewcommand{\arraystretch}{1.05}
\setlength{\tabcolsep}{4pt}
\footnotesize
\begin{tabular}{l c c}
\toprule
Architecture & \# Amp. & Per-element amplitude \\
\midrule
Passive BD-RIS                 & $0$              & $1$ \\
A/P-BD                         & $G_1$            & $\beta_{1,g}/\sqrt{M_{G,1}}$ on RS$_1$ \\
FC/SC-BD                       & $G_1{+}L_2$      & $\tilde{\beta}_{2,\ell}/\sqrt{T_{2,\ell}}$ on RS$_2$ \\
SC/SC-BD                       & $L_1{+}L_2$      & $\tilde{\beta}_{s,\ell}/\sqrt{T_{s,\ell}}$ \\
FC-active diag.\ RIS \cite{long2021}& $M$         & $\beta_n$ \\
\bottomrule
\end{tabular}
\end{table}

\section{SNR Maximization with Hybrid BD-RIS}\label{sec:opt}
We seek the hybrid BD-RIS scattering matrix $\bP$ that maximizes the receive SNR $\gamma$. We partition the RS-index set $\{1,2\}$ into $\mathcal{S}_{\mathrm{p}}$ (passive), $\mathcal{S}_{\mathrm{fc}}$ (FC-active), and $\mathcal{S}_{\mathrm{sc}}$ (SC-active), with $\mathcal{S}_{\mathrm{a}}\triangleq\mathcal{S}_{\mathrm{fc}}\cup\mathcal{S}_{\mathrm{sc}}$ (active). For $s\in\mathcal{S}_{\mathrm{sc}}$, $\ell(g)$ denotes the cluster of group $g$, $\mathcal{G}_{s,\ell}$ its index set, and $T_{s,\ell}=K_{s,\ell}M_{G,s}$ its size. Plugging \eqref{eq:Phi_fc}--\eqref{eq:Phi_sc} into $\gamma$, the optimization problem reads
\begin{subequations}\label{eq:P0}
\begin{align}
\text{(P0):}\,\max_{\{\bP_{s,g}\}}\;&
\frac{\bigl|\sum_{s,g}\mathbf{h}_{R,s,g}\herm\bP_{s,g}\mathbf{h}_{T,s,g}\bigr|^{2}}
     {\sigma^{2}+\sum_{s\in\mathcal{S}_{\mathrm{a}}}\delta_s^{2}\|\mathbf{h}_{R,s}\herm\bP_s\|^{2}}\label{eq:P0obj}\\
\text{s.t.}\;& \bP_{s,g}=\bP_{s,g}\trans, \quad \forall s,g,\label{eq:P0sym}\\
& \bP_{s,g}\herm\bP_{s,g}=\bI_{M_{G,s}}, \quad s\in\mathcal{S}_{\mathrm{p}},\,\forall g,\label{eq:P0pas}\\
& \bP_{s,g}\herm\bP_{s,g}=\tfrac{\beta_{s,g}^{2}}{M_{G,s}}\bI_{M_{G,s}}, \quad s\in\mathcal{S}_{\mathrm{fc}},\,\forall g,\label{eq:P0fc}\\
& \bP_{s,g}\herm\bP_{s,g}=\tfrac{\tilde{\beta}_{s,\ell(g)}^{2}}{T_{s,\ell(g)}}\bI_{M_{G,s}}, \quad s\in\mathcal{S}_{\mathrm{sc}},\,\forall g,\label{eq:P0sc}\\
& 0\le\beta_{s,g},\,\tilde{\beta}_{s,\ell}\le\beta_{\max},\quad P_{r,s}\le P_{r,s}^{\max},\;s\in\mathcal{S}_{\mathrm{a}},\label{eq:P0c4}
\end{align}
\end{subequations}
where \eqref{eq:P0sym} is reciprocity constraint, \eqref{eq:P0pas} represents unitarity for passive RSs as in \cite{shen2022,santamaria2023bdris}, and \eqref{eq:P0fc}--\eqref{eq:P0sc} relax it through the equal-power-allocation factors $1/M_{G,s}$ and $1/T_{s,\ell}$ \cite[Sec.~II-B]{ntougias2025hybrid}. Every feasible $\bP_{s,g}$ admits the structural factorization $\bP_{s,g}=(\beta_{s,g}/\sqrt{M_{G,s}})\bT_{s,g}$ for $s\in\mathcal{S}_{\mathrm{fc}}$ and $\bP_{s,g}=(\tilde{\beta}_{s,\ell(g)}/\sqrt{T_{s,\ell(g)}})\bT_{s,g}$ for $s\in\mathcal{S}_{\mathrm{sc}}$, with $\bT_{s,g}$ unitary symmetric.

Problem (P0) is non-convex due to the unitary-symmetric constraints \eqref{eq:P0sym}--\eqref{eq:P0sc} and the numerator/denominator coupling in \eqref{eq:P0obj} via the active-RS noise. We adopt an alternating optimization (AO) framework that decouples (P0) into a \emph{spatial} subproblem (optimizing $\{\bT_{s,g}\}$ for fixed amplification) and a \emph{power} subproblem (optimizing $\{\beta_{s,g}\}$ or $\{\tilde{\beta}_{s,\ell}\}$ for fixed $\{\bT_{s,g}\}$). Both admit closed-form solutions and the optimal $\{\bT_{s,g}^{\star}\}$ is independent of the amplification factors, so the AO terminates in a single iteration.

\subsection{Spatial Design via Per-Group Takagi Factorization}\label{subsec:takagi}
For unitary $\bT_{s,g}$, $|\mathbf{h}_{R,s,g}\herm\bT_{s,g}\mathbf{h}_{T,s,g}|\le \|\mathbf{h}_{R,s,g}\|\|\mathbf{h}_{T,s,g}\|$ by Cauchy--Schwarz; \cite{santamaria2023bdris} shows this bound is achieved by a unitary \emph{symmetric} matrix from a Takagi factorization.

\begin{proposition}[Per-group Takagi solution \cite{santamaria2023bdris}]\label{prop:takagi}
With $\mathbf{u}_{R,s,g}=\mathbf{h}_{R,s,g}/\|\mathbf{h}_{R,s,g}\|$, $\mathbf{u}_{T,s,g}=\mathbf{h}_{T,s,g}/\|\mathbf{h}_{T,s,g}\|$, and the rank-2 complex symmetric matrix
\begin{equation}\label{eq:Asg}
    \bA_{s,g}=\tfrac{1}{2}\bigl(\mathbf{u}_{R,s,g}\mathbf{u}_{T,s,g}\herm+(\mathbf{u}_{R,s,g}\mathbf{u}_{T,s,g}\herm)\trans\bigr),
\end{equation}
the Takagi factorization $\bA_{s,g}=\bQ_{s,g}\bSigma_{s,g}\bQ_{s,g}\trans$ ($\bQ_{s,g}$ unitary, $\bSigma_{s,g}\succeq 0$ diagonal) yields $\bT_{s,g}^{\star}=\bQ_{s,g}\bQ_{s,g}\trans$ unitary symmetric, with $\mathbf{h}_{R,s,g}\herm\bT_{s,g}^{\star}\mathbf{h}_{T,s,g}=\|\mathbf{h}_{R,s,g}\|\|\mathbf{h}_{T,s,g}\|\in\R_{+}$.
\end{proposition}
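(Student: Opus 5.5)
The plan is to reduce the statement to a single linear condition. Because $\bT_{s,g}^{\star}$ is unitary, $\mathbf{h}_{R,s,g}\herm\bT_{s,g}^{\star}\mathbf{h}_{T,s,g}=\|\mathbf{h}_{R,s,g}\|\|\mathbf{h}_{T,s,g}\|$ is equivalent to $\mathbf{u}_{R,s,g}\herm\bT_{s,g}^{\star}\mathbf{u}_{T,s,g}=1$. By the equality case of Cauchy--Schwarz (two unit vectors), this in turn is equivalent to $\bT_{s,g}^{\star}\mathbf{u}_{T,s,g}=\mathbf{u}_{R,s,g}$.

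\textbf{Structural facts.} For any unitary $\bQ_{s,g}$, the matrix $\bT=\bQ_{s,g}\bQ_{s,g}\trans$ is symmetric, and $\bT\herm\bT=\bQ_{s,g}^{\ast}\bQ_{s,g}\herm\bQ_{s,g}\bQ_{s,g}\trans=\bI$, so it is unitary. The columns $\mathbf{q}_i$ of $\bQ_{s,g}$ satisfy $\bT\mathbf{q}_i^{\ast}=\mathbf{q}_i$, which follows from $\bQ_{s,g}\trans\bQ_{s,g}^{\ast}=\bI$. Thus $\bT$ acts on the conjugated Takagi vectors as $\mathbf{q}_i^{\ast}\mapsto\mathbf{q}_i$. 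Since $\bT$ is symmetric and unitary, we have $\bT^{\ast}=\bT^{-1}$. Hence the target condition $\bT\mathbf{u}_T=\mathbf{u}_R$ automatically implies $\bT\mathbf{u}_R^{\ast}=\mathbf{u}_T^{\ast}$, so $\bT$ must swap the pair $\{\mathbf{u}_T,\mathbf{u}_R^{\ast}\}$ into $\{\mathbf{u}_R,\mathbf{u}_T^{\ast}\}$.

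\textbf{Explicit Takagi vectors.} Set $\mathbf{a}=\mathbf{u}_{R,s,g}$, $\mathbf{b}=\mathbf{u}_{T,s,g}^{\ast}$ and $c=\mathbf{b}\herm\mathbf{a}$, so that $\bA_{s,g}=\tfrac12(\mathbf{a}\mathbf{b}\trans+\mathbf{b}\mathbf{a}\trans)$. I will compute the two nonzero Takagi pairs in closed form. Using $\bA\mathbf{a}^{\ast}=\tfrac12(c^{\ast}\mathbf{a}+\mathbf{b})$ and $\bA\mathbf{b}^{\ast}=\tfrac12(\mathbf{a}+c\mathbf{b})$, one checks the following. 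For $t=\pm e^{j\angle c}$, the vectors $\mathbf{a}+t\mathbf{b}$ satisfy a Takagi relation $\bA\mathbf{x}^{\ast}=\lambda\mathbf{x}$ up to a phase. The corresponding values are $\lambda=(1\pm|c|)/2\ge0$, and the two vectors are mutually orthogonal. After normalizing, and after absorbing the phase $e^{-j\varphi/2}$ that makes each Takagi value real and nonnegative, these give $\mathbf{q}_1,\mathbf{q}_2$. The remaining $M_{G,s}-2$ columns complete an orthonormal basis of the orthogonal complement of $\mathrm{span}\{\mathbf{a},\mathbf{b}\}$, with $\Sigma$ entries $0$. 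That complement only affects $\bT$ outside the relevant subspace, so it is irrelevant to the bilinear form.

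\textbf{Main obstacle.} The crux is verifying that the map $\mathbf{q}_i^{\ast}\mapsto\mathbf{q}_i$ on $\mathrm{span}\{\mathbf{a}^{\ast},\mathbf{b}^{\ast}\}$ sends $\mathbf{b}^{\ast}=\mathbf{u}_T$ to $\mathbf{a}=\mathbf{u}_R$, rather than to a phase-rotated copy of $\mathbf{b}$. The phases chosen for the $\mathbf{q}_i$ matter here, so I will track them explicitly. With $\mathbf{q}\propto e^{-j\varphi/2}(\mathbf{a}+e^{j\varphi}\mathbf{b})$, one gets $\bT(\mathbf{a}^{\ast}+e^{-j\varphi}\mathbf{b}^{\ast})=e^{-j\varphi}(\mathbf{a}+e^{j\varphi}\mathbf{b})=\mathbf{b}+e^{-j\varphi}\mathbf{a}$. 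Taking the $\pm$ combinations of the two pairs should then isolate $\bT\mathbf{b}^{\ast}=\mathbf{a}$ and $\bT\mathbf{a}^{\ast}=\mathbf{b}$. If the phase bookkeeping for the matrix $\bA_{s,g}$ exactly as written yields $\bT\mathbf{u}_T=\mathbf{u}_T^{\ast}$-type fixed points instead of the swap, I would fall back on the construction of \cite{santamaria2023bdris}. That construction builds the symmetric matrix from the pair $\mathbf{u}_R,\mathbf{u}_T^{\ast}$ so that its Takagi vectors are $\mathbf{a}\pm e^{j\varphi}\mathbf{b}$ with the phase convention above, and I would note the corresponding correction to \eqref{eq:Asg}.

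Degenerate cases need separate treatment. When $|c|=1$, meaning $\mathbf{u}_R\parallel\mathbf{u}_T^{\ast}$, one Takagi value vanishes and a single vector suffices. When $M_{G,s}=1$, everything reduces to the scalar phase $e^{j(\angle h_R+\ldots)}$ case.
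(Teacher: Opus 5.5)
Your argument is correct, and the step you left open closes for $\bA_{s,g}$ exactly as written. Neither the fallback nor a correction to \eqref{eq:Asg} is needed. Your identity $\bT(\mathbf{a}^{\ast}+e^{-j\varphi}\mathbf{b}^{\ast})=\mathbf{b}+e^{-j\varphi}\mathbf{a}$ holds for $\varphi=\angle c$ and also for $\varphi=\angle c+\pi$, which is the phase giving the value $(1-|c|)/2\ge 0$. Subtracting the two instances yields $\bT\mathbf{b}^{\ast}=\mathbf{a}$, i.e., $\bT\mathbf{u}_T=\mathbf{u}_R$; adding them yields $\bT\mathbf{u}_R^{\ast}=\mathbf{u}_T^{\ast}$ (indices $s,g$ dropped).

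The paper itself only points to \cite[Prop.~1]{santamaria2023bdris}. The argument behind that reference is essentially a trace computation rather than an explicit construction of Takagi vectors. Since $\bT$ is symmetric, $\mathbf{u}_R\herm\bT\mathbf{u}_T=\Tr(\bT\mathbf{u}_T\mathbf{u}_R\herm)=\Tr(\bT\bA^{\ast})$. For $\bT=\bQ\bQ\trans$ this equals $\Tr(\bQ\bQ\trans\bQ^{\ast}\bSigma\bQ\herm)=\Tr(\bSigma)=\tfrac12(1+|c|)+\tfrac12(1-|c|)=1$.

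That route uses only the singular values of $\bA$, which you already computed. It holds verbatim for every Takagi factorization, including the SVD-based one of Remark~\ref{rem:takagi_svd}, and needs no case distinctions.

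Your route needs the phase bookkeeping and the separate cases $|c|=1$ and $c=0$. To cover an arbitrary computed factorization rather than the particular one you build, it also needs one more sentence. A column with $\sigma_i>0$ satisfies $\bA\bA\herm\mathbf{q}_i=\sigma_i^{2}\mathbf{q}_i$, so it is fixed up to a sign when $0<|c|<1$ and up to a real orthogonal mixing when $c=0$. Neither ambiguity changes $\mathbf{q}_1\mathbf{q}_1\trans+\mathbf{q}_2\mathbf{q}_2\trans$. In return, your construction describes $\bT^{\star}_{s,g}$ explicitly on $\mathrm{span}\{\mathbf{u}_T,\mathbf{u}_R^{\ast}\}$: it is precisely the swap forced by symmetry and unitarity, which the trace identity does not reveal.
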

\begin{proof}
See \cite[Prop.~1]{santamaria2023bdris}.
\end{proof}

A key consequence is that all per-group contributions add coherently, generalizing the property that group-connected BD-RIS exploits in \cite[Sec.~II-A]{santamaria2023bdris}.

\begin{remark}\label{rem:takagi_svd}
The Takagi factorization can be obtained from a standard singular value decomposition (SVD) $\bA_{s,g}=\bF\bK\bG\herm$ by setting $\bQ_{s,g}=\bF\,\diag(e^{j\boldsymbol{\varphi}})$, $\boldsymbol{\varphi}=\angle\diag(\bF\herm\bG^{\ast})/2$ \cite[Rem.~2]{santamaria2023bdris}.
\end{remark}

\subsection{Power Design: Optimal Amplification Factors}\label{subsec:beta}
With $\bT_{s,g}^{\star}$ fixed and $a_{s,g}\triangleq\|\mathbf{h}_{R,s,g}\|\|\mathbf{h}_{T,s,g}\|$, $b_{s,g}\triangleq\|\mathbf{h}_{R,s,g}\|^{2}$, $c_{s,g}\triangleq\|\mathbf{h}_{T,s,g}\|^{2}$, \eqref{eq:Phi_fc}--\eqref{eq:Phi_sc} give
\begin{equation}\label{eq:eqsig}
    \mathbf{h}_{R,s}\herm\bP_s\mathbf{h}_{T,s}=
    \begin{cases}
       \sum_g \frac{\beta_{s,g}}{\sqrt{M_{G,s}}}a_{s,g}, & s\in\mathcal{S}_{\mathrm{fc}},\\[2pt]
       \sum_\ell \frac{\tilde{\beta}_{s,\ell}}{\sqrt{T_{s,\ell}}}\sum_{g\in\mathcal{G}_{s,\ell}}a_{s,g}, & s\in\mathcal{S}_{\mathrm{sc}},
    \end{cases}
\end{equation}
\begin{equation}\label{eq:eqnoise}
    \delta_s^{2}\|\mathbf{h}_{R,s}\herm\bP_s\|^{2}\!=\!
    \begin{cases}
       \delta_s^{2}\sum_g\frac{\beta_{s,g}^{2}}{M_{G,s}}b_{s,g}, & s\in\mathcal{S}_{\mathrm{fc}},\\[2pt]
       \delta_s^{2}\sum_\ell\frac{\tilde{\beta}_{s,\ell}^{2}}{T_{s,\ell}}\sum_{g\in\mathcal{G}_{s,\ell}}b_{s,g}, & s\in\mathcal{S}_{\mathrm{sc}}.
    \end{cases}
\end{equation}

\subsubsection{FC-active group-connected BD-RIS}
Defining $\tilde{a}_{s,g}\triangleq a_{s,g}/\sqrt{M_{G,s}}$, $\tilde{\rho}_{s,g}\triangleq(P_t c_{s,g}+\delta_s^{2}M_{G,s})/M_{G,s}$, the constraint $\sum_g\beta_{s,g}^{2}\tilde{\rho}_{s,g}\le P_{r,s}^{\max}$ together with \eqref{eq:eqsig}--\eqref{eq:eqnoise} is a quadratically-constrained linear ratio whose Cauchy--Schwarz solution is
\begin{equation}\label{eq:beta_FC}
    \beta_{s,g}^{\star}=\sqrt{\tfrac{P_{r,s}^{\max}}{\tilde{\rho}_{s,g}}}\cdot\frac{\tilde{a}_{s,g}/\tilde{\rho}_{s,g}}{\sqrt{\sum_{g'}\tilde{a}_{s,g'}^{2}/\tilde{\rho}_{s,g'}}},
\end{equation}
refined by a 1-D search over a global scaling $\eta\in(0,1]$ (i.e., $\beta_{s,g}\!\leftarrow\!\eta\beta_{s,g}^{\star}$) to balance numerator gain and active-noise denominator.

\subsubsection{SC-active group-connected BD-RIS}
Defining the per-cluster aggregates $A_{s,\ell}\triangleq\sum_{g\in\mathcal{G}_{s,\ell}}a_{s,g}$, $C_{s,\ell}\triangleq\sum_{g\in\mathcal{G}_{s,\ell}}c_{s,g}$, the constraint
\begin{equation}\label{eq:Pr_SC}
    \sum_{\ell}\tfrac{\tilde{\beta}_{s,\ell}^{2}}{T_{s,\ell}}\bigl(P_t C_{s,\ell}+\delta_s^{2}T_{s,\ell}\bigr)\le P_{r,s}^{\max}
\end{equation}
combined with \eqref{eq:eqsig}--\eqref{eq:eqnoise} yields, by Cauchy--Schwarz,
\begin{equation}\label{eq:beta_SC}
    \tilde{\beta}_{s,\ell}^{\star}\!=\!\sqrt{P_{r,s}^{\max}T_{s,\ell}}\,\frac{A_{s,\ell}/(P_tC_{s,\ell}+\delta_s^{2}T_{s,\ell})}{\sqrt{\sum_{\ell'}A_{s,\ell'}^{2}/(P_tC_{s,\ell'}+\delta_s^{2}T_{s,\ell'})}},
\end{equation}
again refined by a 1-D scan over $\eta$. For $L_s=1$, \eqref{eq:beta_SC} reduces to $\tilde{\beta}_{s,1}^{\star}=\sqrt{P_{r,s}^{\max}M_s/(P_t\|\mathbf{h}_{T,s}\|^{2}+\delta_s^{2}M_s)}$, the BD-RIS counterpart of \cite[Eq.~(2)]{ntougias2025hybrid}.

\begin{remark}\label{rem:scaling}
Eq.~\eqref{eq:beta_SC} reveals two scaling regimes: $T_{s,\ell}\!\uparrow$ at fixed $L_s$ gives $\tilde{\beta}_{s,\ell}^{\star}A_{s,\ell}/\sqrt{T_{s,\ell}}\!\sim\!T_{s,\ell}$ under i.i.d.\ Rayleigh, recovering the linear active-RIS SNR scaling of \cite[Eq.~(6)]{ntougias2025hybrid}; conversely, $L_s\!\uparrow$ at fixed $T_{s,\ell}$ adds amplifiers but introduces a $1/L_s$ factor in the per-cluster reflect-power budget, in line with \cite[Lemma~2]{ntougias2025hybrid}.
\end{remark}

\subsection{Overall Algorithm}\label{subsec:algo}
The proposed solution is summarized in Algorithm~\ref{alg:hbdris}. Its complexity is dominated by $G_1+G_2$ Takagi factorizations, each via an SVD of an $M_{G,s}\!\times\!M_{G,s}$ matrix at $\mathcal{O}(M_{G,s}^{3})$ flops; for uniform $M_G$, the total cost is $\mathcal{O}(M\,M_G^{2})$, linear in $M$.

\begin{figure}[!t]
\vspace{1ex}
\begin{algorithm}[H]
\centering \footnotesize
\begin{algorithmic}[1]
\footnotesize
\REQUIRE $\mathbf{h}_T,\mathbf{h}_R$; $(M_1,M_2)$; $\{M_{G,s}\}$; $\{T_{s,\ell}\}_{s\in\mathcal{S}_{\mathrm{sc}}}$; $\{P_{r,s}^{\max}\}$, $\{\delta_s^{2}\}$, $P_t$.
\ENSURE $\bP$ and $\gamma$.
\STATE Partition $\mathbf{h}_T,\mathbf{h}_R$ into RSs and per-group blocks.
\FOR{$s=1,2$}
    \FOR{$g=1,\ldots,G_s$}
        \STATE Build $\bA_{s,g}$ as in \eqref{eq:Asg}; SVD $\bA_{s,g}=\bF\bK\bG\herm$; apply Rem.~\ref{rem:takagi_svd} to obtain $\bQ_{s,g}$; set $\bT_{s,g}^{\star}=\bQ_{s,g}\bQ_{s,g}\trans$.
    \ENDFOR
    \IF{$s\in\mathcal{S}_{\mathrm{p}}$} \STATE $\bP_{s,g}=\bT_{s,g}^{\star}$, $\forall g$.
    \ELSIF{$s\in\mathcal{S}_{\mathrm{fc}}$} \STATE Compute $\beta_{s,g}^{\star}$ via \eqref{eq:beta_FC}, line-search over $\eta$; $\bP_{s,g}=(\beta_{s,g}^{\star}/\sqrt{M_{G,s}})\bT_{s,g}^{\star}$.
    \ELSIF{$s\in\mathcal{S}_{\mathrm{sc}}$} \STATE Compute $\tilde{\beta}_{s,\ell}^{\star}$ via \eqref{eq:beta_SC}, line-search over $\eta$; $\bP_{s,g}=(\tilde{\beta}_{s,\ell(g)}^{\star}/\sqrt{T_{s,\ell(g)}})\bT_{s,g}^{\star}$.
    \ENDIF
    \STATE $\bP_s=\blkdiag(\bP_{s,1},\ldots,\bP_{s,G_s})$.
\ENDFOR
\STATE $\bP=\blkdiag(\bP_1,\bP_2)$; evaluate $\gamma$.
\end{algorithmic}
\caption{SNR maximization with hybrid BD-RIS}
\label{alg:hbdris}
\end{algorithm}
\end{figure}

\section{Asymptotic Analysis and Extensions}\label{sec:analysis}
\subsection{Asymptotic SNR Analysis}\label{subsec:asnr}
We adopt the tractable setting of \cite{ntougias2025hybrid,long2021,zhang2023active}, i.e., i.i.d.\ Rayleigh fading $\mathbf{h}_{T,s}\sim\mathcal{CN}(\bzero,\varrho_{T,s}^{2}\bI)$, $\mathbf{h}_{R,s}\sim\mathcal{CN}(\bzero,\varrho_{R,s}^{2}\bI)$, blocked direct path, uniform $M_G$, and a single shared amplifier per RS ($L_s=1$, $\beta_s$ from \eqref{eq:beta_SC}).

\begin{lemma}[Asymptotic per-group gain]\label{lem:per_group}
Under the above assumptions, the per-group factor $\|\mathbf{h}_{R,s,g}\|\|\mathbf{h}_{T,s,g}\|$ in \eqref{eq:eqsig} satisfies $\E\{\|\mathbf{h}_{R,s,g}\|\|\mathbf{h}_{T,s,g}\|\}=\varrho_{R,s}\varrho_{T,s}\kappa(M_G)$, with $\kappa(M_G)\triangleq[\Gamma(M_G+1/2)/\Gamma(M_G)]^{2}$. In particular, $\kappa(1)=\pi/4$ and $\kappa(M_G)\to M_G$ as $M_G\to\infty$.
\end{lemma}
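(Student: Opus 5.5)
The plan is to reduce the expectation to a product of two one-dimensional Gamma-moment computations, using that the Tx--RIS and RIS--Rx links fade independently. Under the assumptions of Section~\ref{subsec:asnr}, the per-group blocks satisfy $\mathbf{h}_{T,s,g}\sim\mathcal{CN}(\bzero,\varrho_{T,s}^{2}\bI_{M_G})$ and $\mathbf{h}_{R,s,g}\sim\mathcal{CN}(\bzero,\varrho_{R,s}^{2}\bI_{M_G})$, and the two are mutually independent. Hence
\begin{equation*}
\E\{\|\mathbf{h}_{R,s,g}\|\|\mathbf{h}_{T,s,g}\|\}=\E\{\|\mathbf{h}_{R,s,g}\|\}\,\E\{\|\mathbf{h}_{T,s,g}\|\},
\end{equation*}
so it suffices to compute $\E\{\|\mathbf{h}\|\}$ for a single vector $\mathbf{h}\sim\mathcal{CN}(\bzero,\varrho^{2}\bI_{M_G})$ and then multiply the two results.

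For this single-vector step, write $\|\mathbf{h}\|^{2}=\sum_{m=1}^{M_G}|h_m|^{2}$. Each $|h_m|^{2}/\varrho^{2}$ is a unit-mean exponential variable, so $X\triangleq\|\mathbf{h}\|^{2}/\varrho^{2}\sim\mathrm{Gamma}(M_G,1)$. Its density is $x^{M_G-1}e^{-x}/\Gamma(M_G)$. The half-order moment then follows directly from the Gamma integral:
\begin{equation*}
\E\{\sqrt{X}\}=\frac{1}{\Gamma(M_G)}\int_0^\infty x^{M_G-1/2}e^{-x}\,dx=\frac{\Gamma(M_G+1/2)}{\Gamma(M_G)}.
\end{equation*}
This gives $\E\{\|\mathbf{h}\|\}=\varrho\,\Gamma(M_G+1/2)/\Gamma(M_G)$. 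Multiplying the Tx and Rx factors yields $\varrho_{R,s}\varrho_{T,s}\kappa(M_G)$ with $\kappa(M_G)=[\Gamma(M_G+1/2)/\Gamma(M_G)]^{2}$, as claimed.

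The special cases are then routine. For $M_G=1$ we have $\Gamma(3/2)/\Gamma(1)=\sqrt{\pi}/2$, so $\kappa(1)=\pi/4$, which is the familiar Rayleigh-amplitude mean squared.

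For large $M_G$, I will use the standard ratio asymptotic $\Gamma(x+1/2)/\Gamma(x)=\sqrt{x}\,(1-\tfrac{1}{8x}+O(x^{-2}))$. It follows from Stirling's formula, or can be bracketed by Gautschi's/Wendel's inequality $\sqrt{x-1/4}\le\Gamma(x+1/2)/\Gamma(x)\le\sqrt{x}$ for $x\ge1$. Squaring gives $M_G-1/4\le\kappa(M_G)\le M_G$, so $\kappa(M_G)/M_G\to1$, and indeed $\kappa(M_G)-M_G\to-1/4$. The statement ``$\kappa(M_G)\to M_G$'' should therefore be read in this asymptotic-equivalence sense, and I will state it that way explicitly.

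The only step requiring care is this last asymptotic claim. The Gamma-moment computation is mechanical, whereas the limit needs a named ratio inequality or a Stirling expansion, together with the clarification that the convergence is relative rather than an exact difference tending to zero. I would cite Wendel's inequality $\Gamma(x+a)/\Gamma(x)\le x^{a}$ for $a\in(0,1)$ for the upper bound. For the lower bound I would use log-convexity of $\Gamma$: it gives $\Gamma(x+1/2)^{2}\le\Gamma(x)\Gamma(x+1)$ and hence the upper bound again, so I would instead obtain the lower bound from the Stirling expansion, or from Kershaw's inequality $\Gamma(x+1/2)/\Gamma(x)>\sqrt{x-1/4}$.
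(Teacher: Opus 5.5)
Your proposal is correct and follows essentially the same route as the paper, whose proof simply notes that the two norms are independent scaled chi variables with $2M_G$ degrees of freedom and then applies $\E\{\|\mathbf{h}\|\}=\varrho\,\Gamma(M_G+1/2)/\Gamma(M_G)$. Your treatment of the large-$M_G$ limit fills in a step the paper leaves implicit, and you rightly point out that ``$\kappa(M_G)\to M_G$'' can only hold in the ratio sense, since $\kappa(M_G)-M_G\to-1/4$.
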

\begin{proof}
$\|\mathbf{h}_{R,s,g}\|,\|\mathbf{h}_{T,s,g}\|$ are independent scaled chi-distributed with $2M_G$ degrees of freedom; the result follows from $\E\{\|\mathbf{h}\|\}=\varrho\,\Gamma(M_G+1/2)/\Gamma(M_G)$.
\end{proof}

\begin{remark}\label{rem:bdris_gain}
For $M_G=1$, $\kappa=\pi/4$ recovers the $\pi^{2}/16$ factor of \cite{long2021,zhang2023active,ntougias2025hybrid}; as $M_G\to\infty$, $\kappa\to M_G$ yields the $10\log_{10}(16/\pi^{2})\!\approx\!2.1$ dB BD-RIS gain of \cite[Eq.~(62)]{li2023}.
\end{remark}

\begin{proposition}[Asymptotic SNR of A/P-BD]\label{prop:asy_AP}
For A/P-BD with $M_1=aM$ active and $M_2=(1-a)M$ passive elements, both group-connected with size $M_G$, and $L_1=1$, the asymptotic SNR as $M\to\infty$ is
\begin{equation}\label{eq:asy_AP}
    \gamma_{\mathrm{AP}}^{\mathrm{BD}}\!\to\!\kappa^{2}(M_G)\!\left[\gamma_{\mathrm{a}}\bigl(aM,P_t,P_{r,1}^{\max}\bigr)+c_{\mathrm{int}}\,\gamma_{\mathrm{p}}\bigl((1-a)M,P_t\bigr)\right],
\end{equation}
where $\gamma_{\mathrm{a}},\gamma_{\mathrm{p}}$ are the active/passive (diagonal) RIS asymptotic SNRs of \cite[Eqs.~(6)--(7)]{ntougias2025hybrid} and $c_{\mathrm{int}}\in(0,1)$ captures the Takagi cross-term. The factor $\kappa^{2}(M_G)/(\pi^{2}/16)$ is the BD-RIS power gain over the diagonal hybrid RIS for given $M_G$.
\end{proposition}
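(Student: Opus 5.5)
Under the A/P-BD configuration, Proposition~\ref{prop:takagi} applied group by group makes every term $\mathbf{h}_{R,s,g}\herm\bT_{s,g}^{\star}\mathbf{h}_{T,s,g}$ real and positive. Using \eqref{eq:eqsig}--\eqref{eq:eqnoise} with $L_1=1$, the SNR then has the explicit form
\begin{equation*}
\gamma=\frac{P_t\bigl(\tfrac{\beta_1}{\sqrt{aM}}\sum_{g}a_{1,g}+\sum_{g}a_{2,g}\bigr)^{2}}{\sigma^{2}+\delta_1^{2}\tfrac{\beta_1^{2}}{aM}\sum_g b_{1,g}},
\end{equation*}
with $\beta_1$ given by \eqref{eq:beta_SC}. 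At $L_s=1$ this reduces to $\beta_1^{2}=P_{r,1}^{\max}aM/(P_t\|\mathbf{h}_{T,1}\|^{2}+\delta_1^{2}aM)$. The plan is to evaluate each aggregate by the law of large numbers as $M\to\infty$ at fixed $M_G$ and $a$.

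\textbf{Step 1: replace sums by expectations.} The groups are i.i.d., and there are $aM/M_G$ of them in RS$_1$ and $(1-a)M/M_G$ in RS$_2$. By Lemma~\ref{lem:per_group},
\[
\sum_g a_{s,g}\approx \frac{M_s}{M_G}\,\varrho_{R,s}\varrho_{T,s}\kappa(M_G),
\]
and the quadratic aggregates satisfy $\|\mathbf{h}_{T,1}\|^{2}\approx aM\varrho_{T,1}^{2}$ and $\sum_g b_{1,g}\approx aM\varrho_{R,1}^{2}$. Substituting these into $\beta_1$ and the noise term shows that $\beta_1^{2}/(aM)$ tends to a constant. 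Hence the active amplitude term $\tfrac{\beta_1}{\sqrt{aM}}\sum_g a_{1,g}$ grows like $M\kappa(M_G)/M_G$, and the active noise term tends to a constant. This is exactly the structure of the diagonal computation in \cite{ntougias2025hybrid}, with $\pi/4$ replaced by $\kappa(M_G)$ up to the $1/M_G$ group-count normalisation. I will absorb that normalisation into the definition of $\gamma_{\mathrm{a}},\gamma_{\mathrm{p}}$, or into $\kappa$ by matching the $M_G=1$ case.

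\textbf{Step 2: expand the square and identify terms.} Expanding the numerator gives three contributions:
\begin{itemize}
\item the active-only square, which divided by the limiting denominator is $\kappa^{2}(M_G)\,\gamma_{\mathrm{a}}(aM,P_t,P_{r,1}^{\max})$ after comparison with \cite[Eq.~(6)]{ntougias2025hybrid};
\item the passive-only square $P_t\kappa^{2}(\cdot)M^{2}\cdots$, which matches $\gamma_{\mathrm{p}}$ of \cite[Eq.~(7)]{ntougias2025hybrid}, except that its effective denominator is now the active-noise-inflated one, $\sigma^{2}+\delta_1^{2}(\cdot)$, rather than $\sigma^{2}$;
\item the cross term.
\end{itemize}
I will define $c_{\mathrm{int}}$ so that it collects the passive-square and cross contributions relative to $\kappa^{2}\gamma_{\mathrm{p}}$. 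The claim $c_{\mathrm{int}}\in(0,1)$ should then follow from a ratio of the form $\sigma^{2}/(\sigma^{2}+\delta_1^{2}\cdot\text{const})<1$, combined with nonnegativity of all terms.

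\textbf{Main obstacle.} The hard part is the cross term $2\sqrt{\cdot}\sqrt{\cdot}$, which is of the same order $M^{2}$ and positive. It could push the coefficient above one unless it is absorbed carefully, or unless $c_{\mathrm{int}}$ is understood as an effective coefficient that depends on the system parameters. I would first try to bound the whole expression between the sum of the two separate terms and the square of sums. This would show that a coefficient capturing the passive and interaction share exists and is positive. Whether it is strictly below one then hinges on the active-noise penalty dominating the coherent cross gain, and I would verify that inequality directly from the limiting constants, conceding a parameter regime if necessary.
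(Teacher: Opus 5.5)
\textbf{Verdict: there is a genuine gap.} Your route matches the paper's sketch: group-wise Takagi alignment, Lemma~\ref{lem:per_group} with the law of large numbers over the i.i.d.\ groups, and a split into active, passive and cross terms. However, Step~1 contains a scaling error, and it leads you to misidentify the difficulty.

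\textbf{The scaling error.} With $L_1=1$, $\beta_1^2=P_{r,1}^{\max}aM/(P_t\|\mathbf{h}_{T,1}\|^2+\delta_1^2aM)\to\bar\beta_1^2\triangleq P_{r,1}^{\max}/(P_t\varrho_{T,1}^2+\delta_1^2)$. So it is $\beta_1^2$, not $\beta_1^2/(aM)$, that tends to a constant. Your two conclusions are in fact inconsistent: if $\beta_1^2/(aM)$ were constant, the noise term $\delta_1^2\tfrac{\beta_1^2}{aM}\sum_g b_{1,g}$ would grow like $M$. The correct consequences are:
\begin{itemize}
\item the active amplitude $\tfrac{\beta_1}{\sqrt{aM}}\sum_g a_{1,g}$ grows like $\sqrt{M}$, not $M$;
\item the noise term tends to $D\triangleq\delta_1^2\varrho_{R,1}^2\bar\beta_1^2$;
\item the active, cross and passive contributions are of orders $M$, $M^{3/2}$ and $M^2$.
\end{itemize}

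\textbf{How the obstacle disappears.} Up to the common prefactor discussed below, $\gamma\simeq\bigl(\sqrt{\gamma_{\mathrm a}}+\sqrt{c\,\gamma_{\mathrm p}}\bigr)^2$ with $c=\sigma^2/(\sigma^2+D)$. This holds because the passive RS adds no noise, so the active-only denominator is exactly that of the active RIS. Hence $c_{\mathrm{int}}=c+2\sqrt{c\,\gamma_{\mathrm a}/\gamma_{\mathrm p}}\to c\in(0,1)$, since $\gamma_{\mathrm a}/\gamma_{\mathrm p}=O(1/M)$. This order separation is the missing idea. The cross term is not of order $M^2$; it vanishes relative to $\gamma_{\mathrm p}$, so you need no sandwich bound and no conceded parameter regime. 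Without it, your plan does not establish $c_{\mathrm{int}}<1$. At moderate $M$, where typically $\gamma_{\mathrm a}\gg\gamma_{\mathrm p}$, the effective coefficient is far above one, so taking the limit is essential.

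\textbf{The group-count factor.} You were right to notice it, and the paper's one-line sketch glosses over it too, but your fix is not admissible. Each RS has $M_s/M_G$ groups, so every coherent sum is $\approx M_s\varrho_{R,s}\varrho_{T,s}\kappa(M_G)/M_G$. All three terms therefore carry $(\kappa(M_G)/M_G)^2$ where the diagonal ones carry $\pi^2/16$.
\begin{itemize}
\item You cannot absorb the $1/M_G^2$ into $\gamma_{\mathrm a},\gamma_{\mathrm p}$, which are fixed by reference to \cite{ntougias2025hybrid}.
\item Nor can you absorb it into $\kappa$, which is fixed by Lemma~\ref{lem:per_group}.
\item Matching at $M_G=1$ is vacuous, since both expressions coincide there.
\end{itemize}
The consistent conclusion is a BD-RIS gain of $(\kappa(M_G)/M_G)^2/(\pi^2/16)$. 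This tends to $16/\pi^2$ as in Remark~\ref{rem:bdris_gain}, and at $M_G=4$ it is about $1.56$~dB, consistent with Fig.~\ref{fig:abs_snr}. By contrast, $\kappa^2(4)/(\pi^2/16)$ would be about $13.6$~dB. You should state this correction to the prefactor explicitly rather than redefine quantities.
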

\begin{proof}[Proof Sketch]
Following \cite[Lemma~1]{ntougias2025hybrid}, the SNR splits into active, passive, and cross-term contributions; each coherent per-group sum is replaced via Lemma~\ref{lem:per_group} by $\kappa(M_G)$ times its diagonal-RIS counterpart, yielding \eqref{eq:asy_AP}. Full details are omitted due to space limitations.
\end{proof}

Hence, the proposed designs inherit the regimes of \cite{ntougias2025hybrid} (standard, large-$M$, large-$P_t$) scaled by $\kappa^{2}(M_G)/(\pi^{2}/16)$: A/P-BD is preferred under scarce reflect-power, SC/SC-BD when amplifier count is the bottleneck, and FC/SC-BD when both performance and energy efficiency matter.

\subsection{Extension to MISO/SIMO Links}\label{subsec:miso}
Alg.~\ref{alg:hbdris} extends to multi-antenna links by replacing $\mathbf{h}_{T,s,g}$ in \eqref{eq:Asg} with the rank-1 surrogate of \cite[Cor.~1]{santamaria2023bdris}: for a multiple-input single-output (MISO) link with channel $\bH_{T,s}\in\C^{M_s\times N_T}$, $\bH_{T,s}=\bU_{T,s}\bLambda_{T,s}\bV_{T,s}\herm$ via SVD, the surrogate is the dominant left singular vector $\mathbf{u}_{T,s,1}$ restricted to the rows of group $g$; the Tx then applies maximum ratio transmission (MRT) on $\mathbf{h}_{\mathrm{eq}}\herm=\mathbf{h}_R\herm\bP\bH_T$ \cite[Rem.~4]{santamaria2023bdris}. The same applies to single-input multiple-output (SIMO) and to the $K$-user SISO multiple access channel (MAC) with $\bH_T=[\sqrt{P_1}\mathbf{h}_1,\ldots,\sqrt{P_K}\mathbf{h}_K]$ \cite[Sec.~III-B]{santamaria2023bdris}.

\section{Numerical Results}\label{sec:results}
We numerically evaluate the proposed hybrid BD-RIS family in a SISO link with blocked direct path. The Rx is at $(50,0,2)$~m and the BD-RIS at $(40,2,5)$~m; the Tx is uniformly placed in a circle of radius $10$~m centered at the origin at $z=2$~m. The path-loss exponent is $\alpha=3.75$ with $\mathrm{PL}=-30-10\alpha\log_{10}d$ (dB), and both Tx-RIS and RIS-Rx links are i.i.d.\ Rayleigh. The noise PSD is $-174$~dBm/Hz, the bandwidth $B=2$~MHz, $P_t=20$~dBm, $P_{r,s}^{\max}=10$~dBm, $\delta_s^{2}=-100$~dBm, and $\beta_{\max}=\sqrt{10}$ \cite{long2021,ntougias2025hybrid}. Curves are averaged over $10^{3}$ Monte Carlo runs. Unless stated otherwise, $a=M_1/M=0.5$ and the BD-RIS group size is $M_G=4$.

\begin{figure}[t]
    \centering
    \includegraphics[width=0.95\columnwidth]{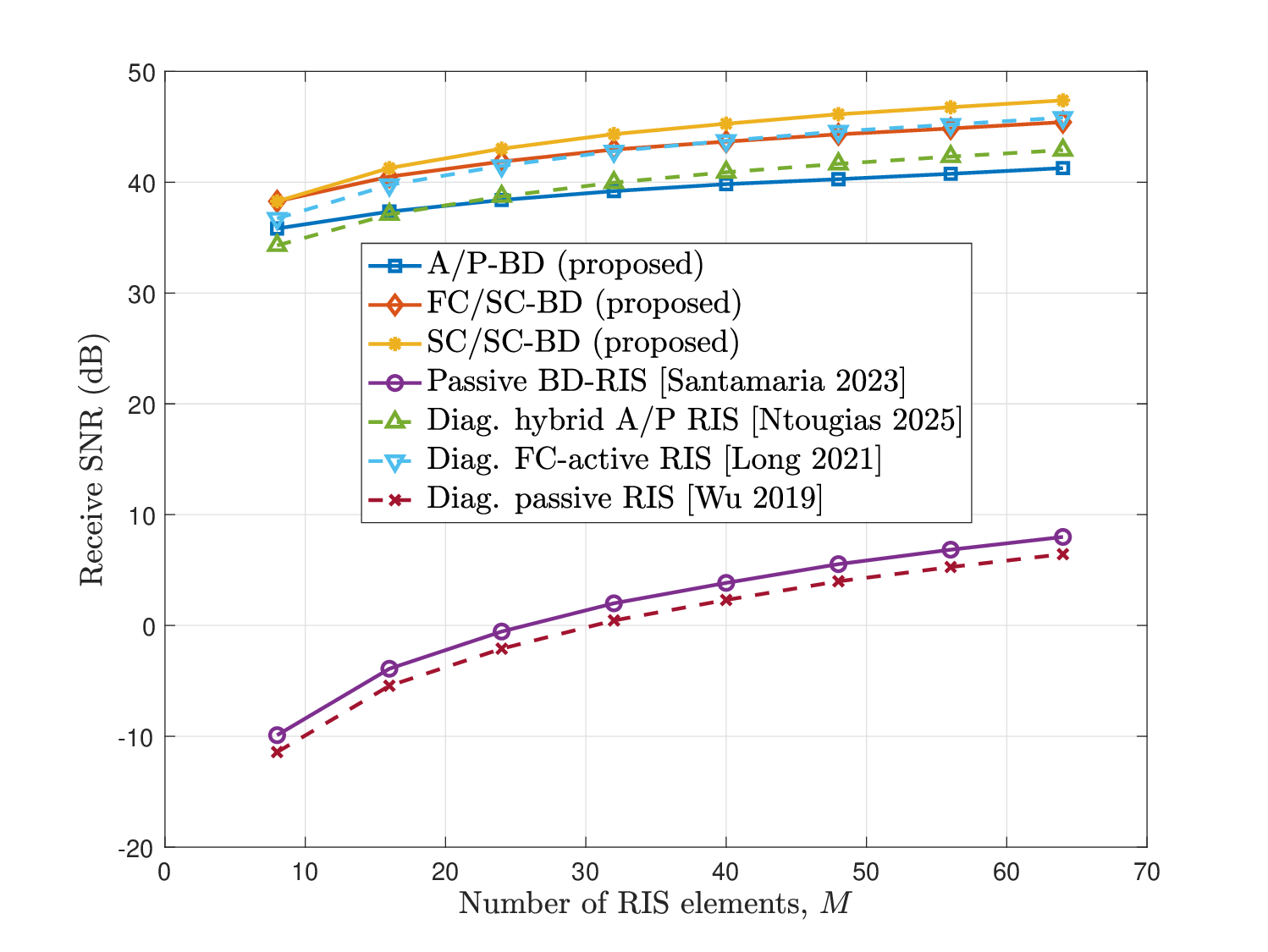}
    \caption{Receive SNR vs.\ $M$ for the proposed hybrid BD-RIS family and the relevant benchmarks ($a=0.5$, $M_G=4$).}
    \label{fig:abs_snr}
\end{figure}

\subsection{Absolute SNR vs.\ Architecture}\label{subsec:res_abs}
Fig.~\ref{fig:abs_snr} reports the absolute receive SNR versus $M$ for the family of hybrid BD-RIS and four benchmarks: passive BD-RIS \cite{santamaria2023bdris}, diagonal hybrid A/P RIS \cite{ntougias2025hybrid}, diagonal FC-active RIS \cite{long2021}, and diagonal passive RIS \cite{wu2019}. Two regimes emerge. The passive architectures lie in the $-10$ to $+8$~dB range, and the passive BD-RIS of \cite{santamaria2023bdris} is about $1.5$~dB above the diagonal passive RIS, in line with the asymptotic value of $10\log_{10}(16/\pi^2)\!\approx\!2.1$~dB \cite[Eq.~(62)]{li2023}. The active architectures lie in the $35$--$48$~dB range. The SC/SC-BD attains the highest SNR (about $47$~dB at $M{=}64$), $1.5$~dB above the diagonal FC-active RIS \cite{long2021} despite using only \emph{two} reflect-type amplifiers vs.\ $M$. The FC/SC-BD design tracks the diagonal FC-active RIS within $0.5$~dB. Interestingly, the A/P-BD curve sits about $1.5$~dB \emph{below} the diagonal hybrid A/P RIS of \cite{ntougias2025hybrid}; this is consistent with Fig.~\ref{fig:snr_gain} below, where we show that for small group sizes the BD-RIS spatial coupling on the passive RS does not compensate for the per-element amplification loss in the active RS unless $M_G$ is very large.

\begin{figure}[t]
    \centering
    \includegraphics[width=0.95\columnwidth]{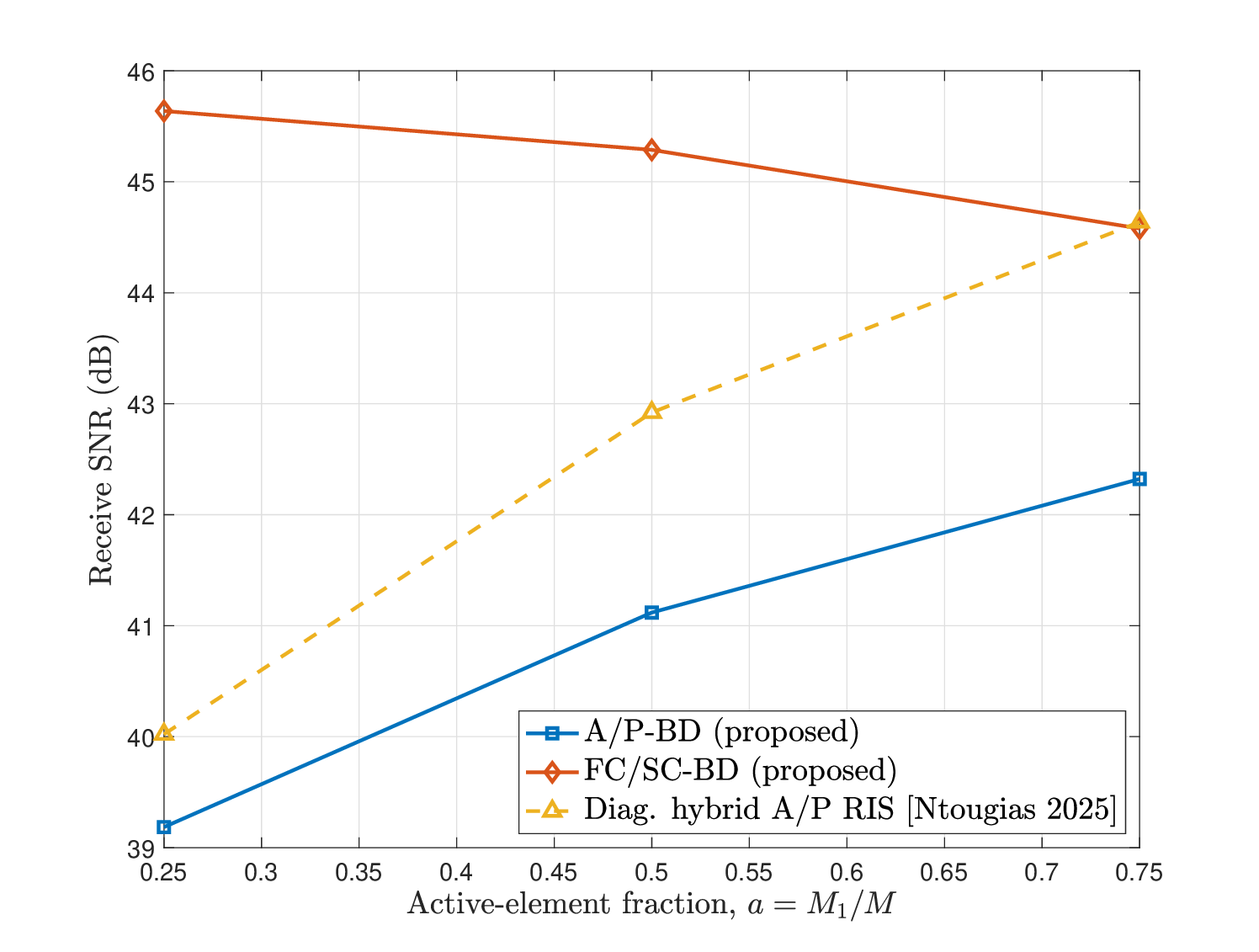}
    \caption{Receive SNR vs.\ $a=M_1/M$ ($M=64$, $M_G=4$).}
    \label{fig:snr_vs_a}
\end{figure}

\subsection{Impact of the Active-Element Fraction $a$}\label{subsec:res_a}
Fig.~\ref{fig:snr_vs_a} sweeps $a\in\{0.25,0.5,0.75\}$ for $M{=}64$ and $M_G{=}4$. The FC/SC-BD design is essentially flat at $\approx 45$~dB across the whole range and even slightly decreases with $a$, since enlarging RS$_1$ at the expense of RS$_2$ reduces the contribution of the SC-active back-end without proportionally increasing the FC-active front-end gain. In contrast, both the A/P-BD and the diagonal hybrid A/P RIS \cite{ntougias2025hybrid} grow monotonically with $a$ (from $\approx 39$~dB to $\approx 42.3$~dB and from $\approx 40$~dB to $\approx 44.6$~dB, respectively), as predicted by Proposition~\ref{prop:asy_AP}: enlarging the active RS strengthens the path-loss-compensated branch. Fig.~\ref{fig:snr_vs_a} therefore identifies FC/SC-BD as the most robust member of the family with respect to $a$, while A/P-BD is preferable when most of the surface can be made active.

\begin{figure}[t]
    \centering
    \includegraphics[width=0.95\columnwidth]{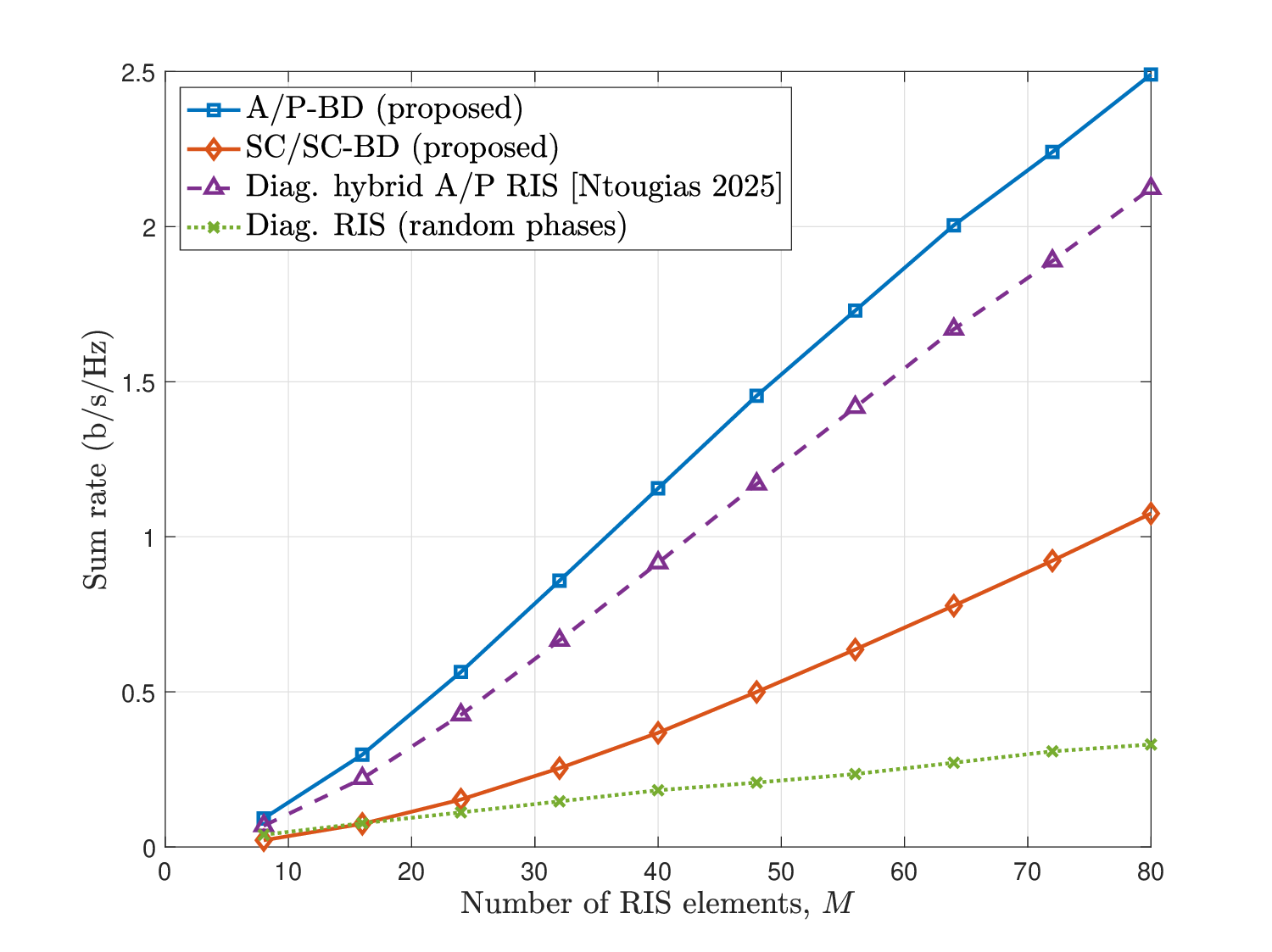}
    \caption{Sum rate vs.\ $M$ in a 2-user SISO MAC ($P_k=20$~dBm, $a=0.5$, $M_G=4$).}
    \label{fig:sumrate}
\end{figure}

\subsection{Sum-Rate in the 2-User SISO MAC}\label{subsec:res_sr}
Fig.~\ref{fig:sumrate} reports the sum rate versus $M$ for a 2-user SISO MAC with per-user power 20~dBm. The A/P-BD attains the highest sum rate over the whole range and exceeds the diagonal hybrid A/P RIS \cite{ntougias2025hybrid} by approximately $20\%$ at $M=80$ ($2.49$ vs.\ $2.13$~b/s/Hz). The reason is that, in the multi-user regime, the rank-1 surrogate of \cite[Cor.~1]{santamaria2023bdris} aligns the BD-RIS spatial coupling on the passive RS coherently with the dominant left singular vector of the user-to-RIS matrix, which is no longer the limitation observed in Fig.~\ref{fig:abs_snr}. The SC/SC-BD achieves about half the sum rate of A/P-BD, since its single shared amplifier per RS cannot exploit the per-user channel diversity, and diagonal RIS with random phases falls below all designs.

\begin{figure}[t]
    \centering
    \includegraphics[width=0.95\columnwidth]{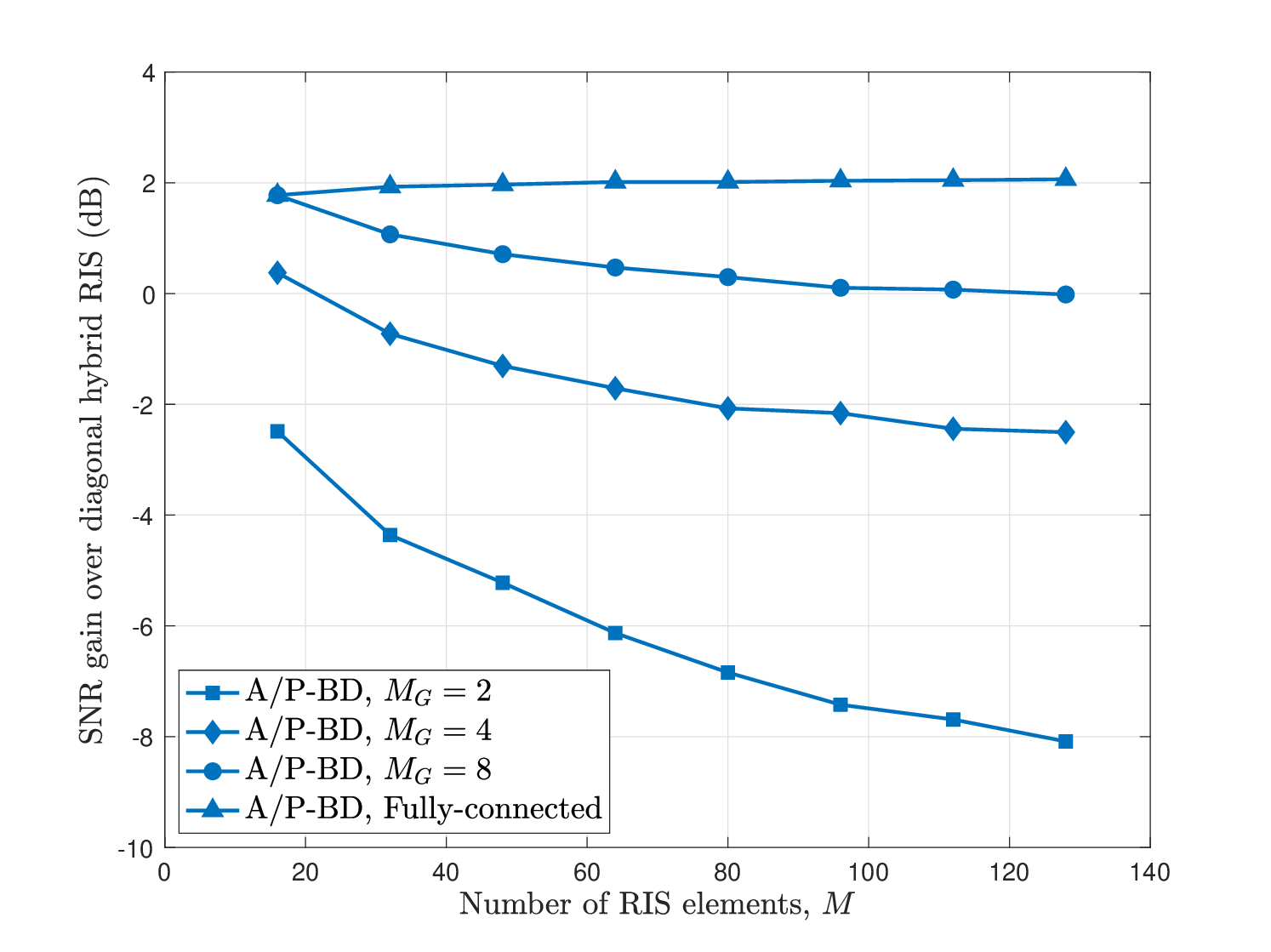}
    \caption{SNR gain of (F2) A/P-BD over the diagonal hybrid A/P RIS of \cite{ntougias2025hybrid} for different group sizes $M_G$ ($a=0.5$).}
    \label{fig:snr_gain}
\end{figure}

\subsection{Role of the Group Size $M_G$ in A/P-BD}\label{subsec:res_gain}
Fig.~\ref{fig:snr_gain} reports the SNR gain (in dB) of the A/P-BD over its diagonal counterpart \cite{ntougias2025hybrid} for $M_G\in\{2,4,8\}$ and the fully-connected case $M_G=M_s$. The fully-connected configuration provides a positive gain of about $2$~dB across the whole range, in line with the BD-RIS asymptotic bound \cite[Eq.~(62)]{li2023}. For finite group sizes, however, the A/P-BD becomes \emph{worse} than its diagonal counterpart as $M$ grows, with a loss of up to $-8$~dB at $M=128$ for $M_G=2$. The reason is structural: in the FC-active RS the $1/\sqrt{M_{G,1}}$ factor introduced by equal-power allocation across the $M_{G,1}$ patches per group (cf.\ \eqref{eq:Phi_fc}) is \emph{not compensated} by the unitary-symmetric BD-RIS coupling because the latter only redistributes phase, not amplitude. As $M$ grows with fixed $M_G$, the number of groups (and therefore the total $1/\sqrt{M_G}$ amplitude penalty across the surface) accumulates. This effect is absent in the fully-connected case ($M_G=M_s$), where there is a single group and the unitary BD-RIS coupling fully exploits the entire RS aperture. From an architectural perspective, this result indicates that the A/P-BD design is competitive with the diagonal hybrid A/P RIS only when the FC-active RS is implemented with a large group size or, ideally, fully-connected.

\subsection{Rate vs.\ Reflect-Power and Transmit Power Budgets}\label{subsec:res_budgets}
\begin{figure}[!t]
    \centering
    \includegraphics[width=0.95\columnwidth]{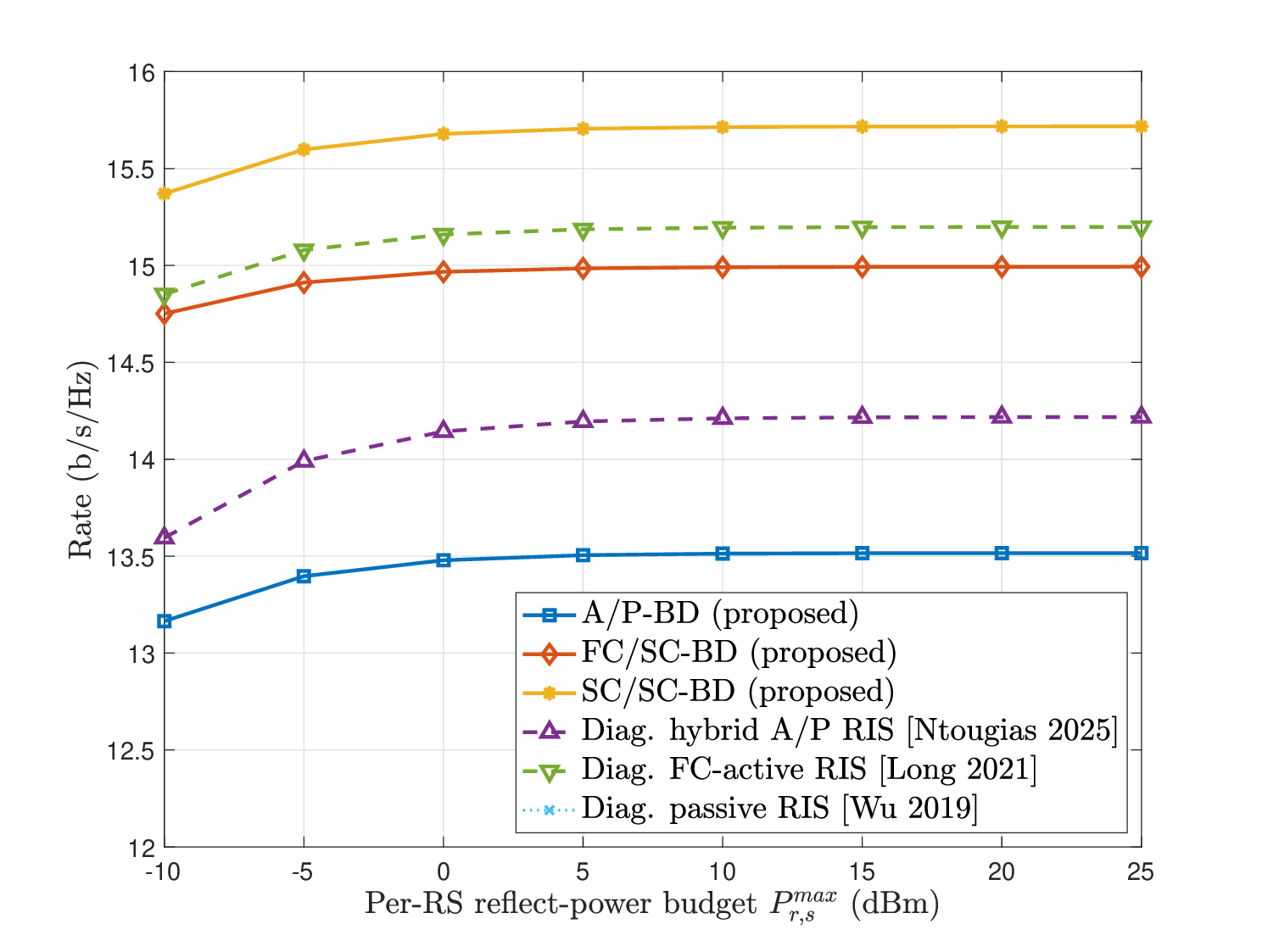}
    \caption{Rate vs.\ per-RS reflect-power budget $P_{r,s}^{\max}$ ($M=64$, $a=0.5$, $M_G=4$, $P_t=20$~dBm).}
    \label{fig:rate_vs_pr}
\end{figure}
\begin{figure}[!t]
    \centering
    \includegraphics[width=0.95\columnwidth]{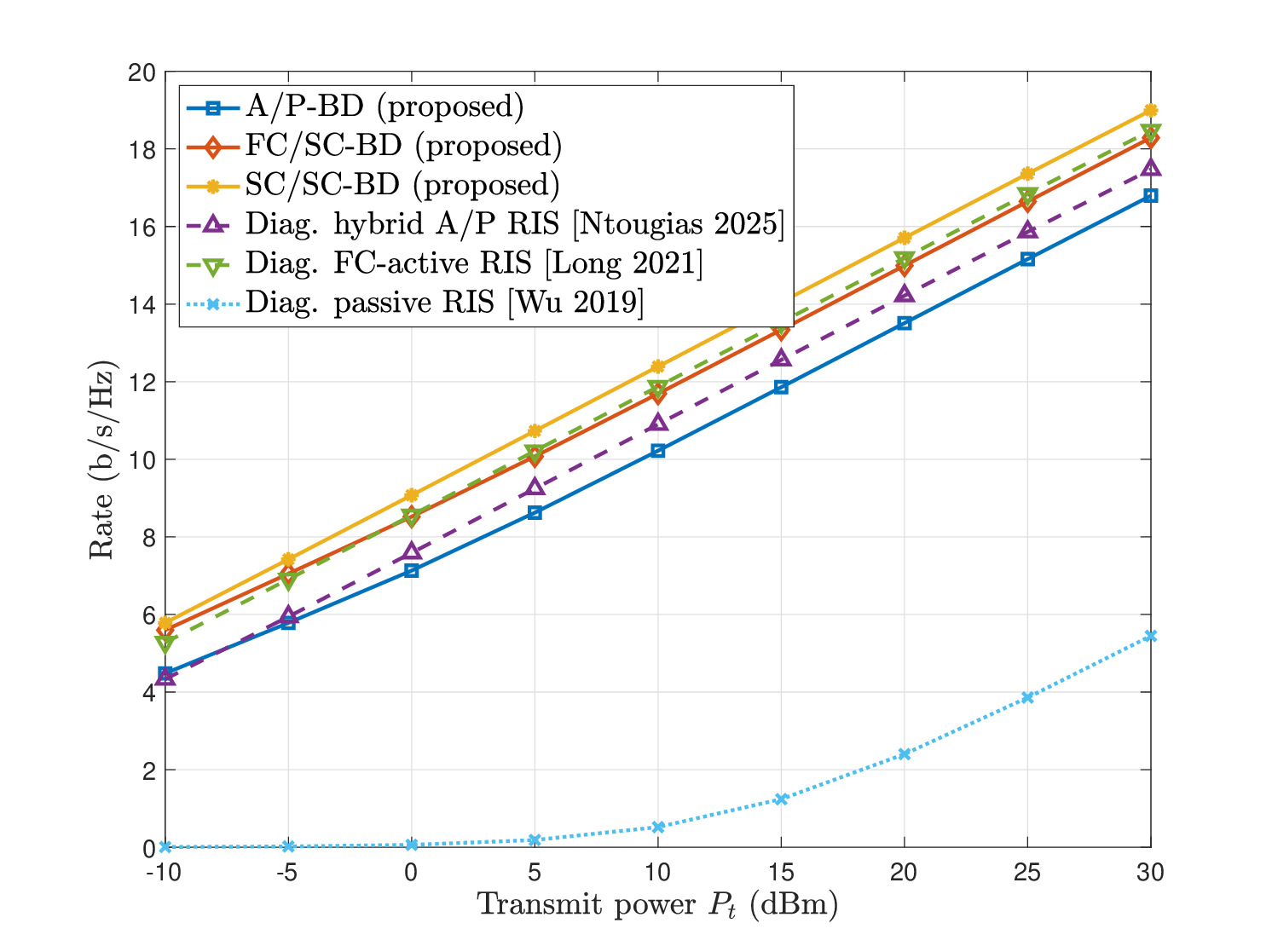}
    \caption{Rate vs.\ transmit power $P_t$ ($M=64$, $a=0.5$, $M_G=4$, $P_{r,s}^{\max}=10$~dBm).}
    \label{fig:rate_vs_pt}
\end{figure}
Fig.~\ref{fig:rate_vs_pr} sweeps the per-RS reflect-power budget $P_{r,s}^{\max}\in[-10,25]$~dBm at $P_t=20$~dBm and depicts the rate $R=\log_2(1+\gamma)$. All active architectures' rate saturates beyond $\approx 5$~dBm, since the transmit power becomes the bottleneck; SC/SC-BD attains the highest rate ($\approx 15.7$~b/s/Hz), $0.5$~b/s/Hz above the diagonal FC-active RIS \cite{long2021} despite using only two amplifiers, and FC/SC-BD tracks the latter within $0.2$~b/s/Hz. Fig.~\ref{fig:rate_vs_pt} sweeps $P_t\in[-10,30]$~dBm at $P_{r,s}^{\max}=10$~dBm, where the rate of all active designs grow linearly with $P_t$ (in dB) and SC/SC-BD again leads. Both figures confirm that, in agreement with Remark~\ref{rem:scaling}, amplifier sharing within an RS is not penalized in rate: the SC/SC-BD architecture, which uses the smallest amplifier count, achieves the highest rate over the entire range of both power budgets.

\section{Conclusion}\label{sec:conclusion}
We introduced a family of hybrid BD-RIS architectures. The SISO Max-SNR problem with blocked direct path was solved in closed form via per-group Takagi factorization and optimal per-cluster amplification. Results show that SC/SC-BD attains the highest SNR with only two amplifiers, exceeding the FC-active diagonal RIS of \cite{long2021}; FC/SC-BD is the most $a$-robust; and A/P-BD outperforms the diagonal A/P RIS of \cite{ntougias2025hybrid} in the 2-user MAC. Finally, for small $M_G$ the equal-power-allocation factor in the FC-active RS erodes the BD-RIS gain, making large or fully-connected groups preferable. Future work includes MIMO and multi-user extensions, elements allocation, and energy-efficient designs.

\bibliographystyle{IEEEtran}
\bibliography{refs}

\end{document}